\documentclass{article}
\usepackage{iclr2027_conference,times}

\usepackage{amsmath,amsfonts,bm}

\def\eqref#1{equation~\ref{#1}}

\def\1{\bm{1}}

\DeclareMathAlphabet{\mathsfit}{\encodingdefault}{\sfdefault}{m}{sl}
\SetMathAlphabet{\mathsfit}{bold}{\encodingdefault}{\sfdefault}{bx}{n}

\usepackage{hyperref}
\usepackage{url}
\usepackage{booktabs}
\usepackage{amsmath}
\usepackage{amssymb}
\usepackage{amsthm}
\usepackage{graphicx}
\usepackage{xcolor}
\usepackage{xspace}
\usepackage{pifont}
\usepackage{microtype}

\newcommand{\method}{\textbf{GLIE}\xspace}
\newcommand{\cmark}{\ding{51}}
\newcommand{\xmark}{\ding{55}}

\newtheorem{proposition}{Proposition}

\title{Generative Late-Interaction Embeddings For \\ Visual Document Retrieval}

\iclrfinalcopy 
\author{%
\makebox[\textwidth][c]{%
\begin{tabular}{@{}c@{\hspace{2.2em}}c@{\hspace{2.2em}}c@{\hspace{2.2em}}c@{}}
Mohamed Eltahir$^{1*}$ & Talal Aloushan$^{1*}$ & Rose Khairoalsendi$^{1*}$ & Jana Shata$^{1}$\\[0.25em]
Mohammed Alhassan$^{1}$ & Leen Alrehaili$^{1}$ & Tanveer Hussain$^{2\ddagger}$ & Naeemullah Khan$^{1\S}$
\end{tabular}}\\[1.0em]
\makebox[\textwidth][c]{$^{1}$King Abdullah University of Science and Technology (KAUST), Thuwal, Saudi Arabia}\\
\makebox[\textwidth][c]{$^{2}$Department of Computer Science, Edge Hill University, Ormskirk, England}\\[0.3em]
\makebox[\textwidth][c]{\small\texttt{\{mohamed.hamid, tlal.aloushan, rouz.alsindi, jana.shata,}}\\
\makebox[\textwidth][c]{\small\texttt{mohammed.alhassan.1, leen.alrehaili, naeemullah.khan\}@kaust.edu.sa}}\\
\makebox[\textwidth][c]{\small\texttt{hussaint@edgehill.ac.uk}}
}

\begin{document}

\vspace*{-30pt}
\maketitle
\lhead{Preprint.}

{\renewcommand{\thefootnote}{\fnsymbol{footnote}}
\footnotetext[1]{Equal contribution.}
\footnotetext[3]{Corresponding author.}
\footnotetext[4]{Principal Investigator (PI).}
}

\vspace{-23pt}
\begin{figure}[h]
\centering
\begin{minipage}[h]{0.5\linewidth}\centering
\includegraphics[width=\linewidth]{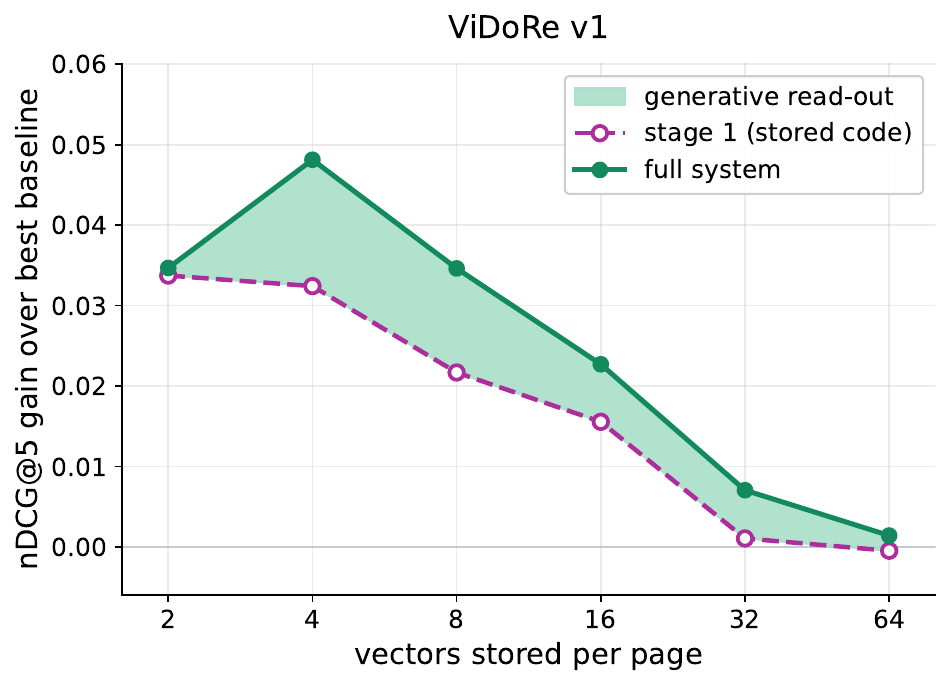}\\[-1pt]
{\small (a)}
\end{minipage}\hfill
\begin{minipage}[h]{0.5\linewidth}\centering
\includegraphics[width=\linewidth]{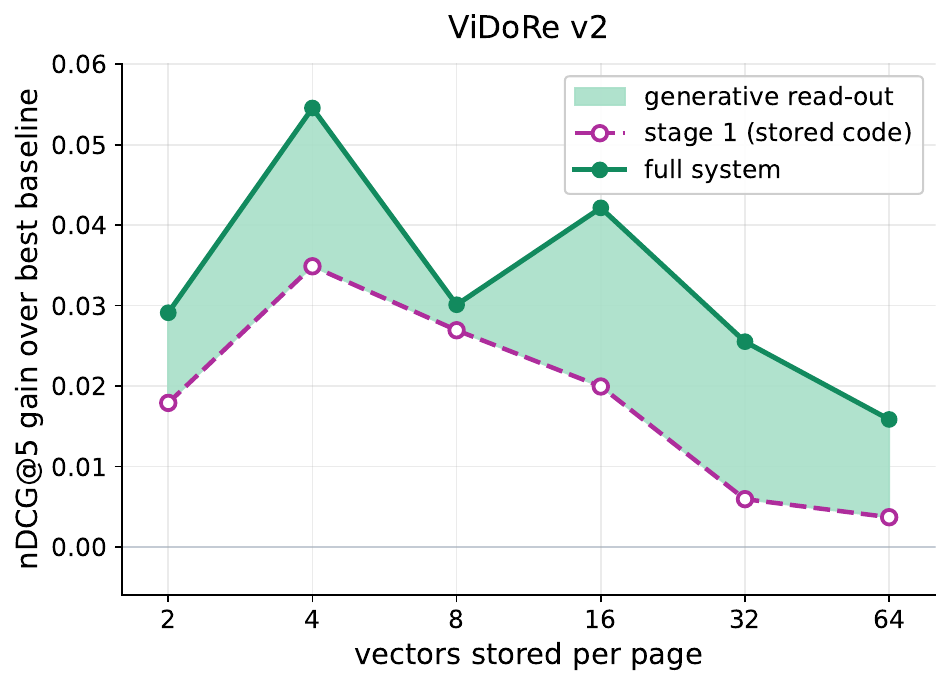}\\[-1pt]
{\small (b)}
\end{minipage}
\caption{Margin over the strongest training-free baseline at each budget on ViDoRe v1 and v2.}
\label{fig:margin}
\end{figure}

\vspace*{-0.1cm}
\begin{abstract}
Late-interaction retrieval is the state-of-the-art for visual document search, but it pays for its accuracy in storage. Existing compression methods retain a subset or local average of the $N \approx 1{,}000$ vectors per page. Under aggressive storage budgets, however, these methods degrade sharply, and alternatives require retraining the encoder. Investigating this degradation across three encoders, we found two consistent properties: the vectors lie exactly on the unit sphere and concentrate near a manifold of intrinsic dimension five to six. This geometry yields two insights. First, standard $k$-means centroids fall inside the sphere, causing systematic underestimation of MaxSim scores. Normalizing them to the surface is a free correction worth up to $+0.093$ nDCG@5 over raw centroids. Second, because the page manifold has few degrees of freedom, the full set of vectors can be regenerated from only a few. To this end, we introduce Generative Late-Interaction Embeddings (\method): $k \ll N$ vectors per page learned from the normalized centroids to serve as both a lightweight index and a basis for regenerating the page's full embedding set. At query time, search runs exclusively on these $k$ vectors, and a decoder expands only the top candidates back to all $N$ vectors for exact rescoring. At four vectors per page on ViDoRe v1, \method retains nearly 80\% of the uncompressed system's nDCG@5, against 70\% for the best prior post-hoc method. These results use a 415K-parameter network fitted in under three GPU-minutes on just a thousand training pages. At a matched training budget, fine-tuning the encoder does not reach even the training-free stage of \method, and the full system beats it at every budget. These patterns hold across a second encoder and ViDoRe v2. By reconstructing evidence on demand rather than sampling it, \method opens a new axis for storage-efficient retrieval, with the decoder as its main design surface. Code available at: \href{https://github.com/mohammad2012191/GLIE}{LINK}
\end{abstract}

\begin{table}[h]
\centering
\small
\caption{Positioning against prior compression for late interaction. The frozen-backbone column sets our experimental scope: a method that updates the encoder cannot be adopted without re-encoding the corpus, so Section~\ref{sec:results} compares against the frozen rows and reproduces only Light-ColPali, at a matched training budget.}
\label{tab:positioning}
\begin{tabular}{lccccc}
\toprule
 & Frozen & Effective & Budget-elastic & Generative &  Adapt cost \\
\textbf{Method} & backbone & $\le16$ vec & (no re-encode) & read-out & per budget \\
\midrule
Token pooling / merging & \cmark & \xmark & \cmark & \xmark &  $0$ \\
Light-ColPali (finetuned) & \xmark & \xmark & \xmark & \xmark &  $\sim$72 GPU-h \\
MetaEmbed & \xmark & \cmark & \cmark & \xmark &  $\sim$192 GPU-h \\
\midrule
\method, training-free stage & \cmark & \cmark & \cmark & \xmark & $0$ \\
\method, full & \cmark & \cmark & \cmark & \cmark  & $\sim$3 GPU-min \\
\bottomrule
\end{tabular}
\end{table}

\vspace*{-1cm}

\section{Introduction}\label{sec:intro}

Retrieval over visual documents has converged on late interaction. Instead of collapsing a page into one vector, models such as ColPali~\citep{faysse2025colpali} embed every image patch separately and score a query against a page with the MaxSim operator, the sum over query tokens of the maximum inner product with any page vector. This design is what makes visual retrieval work. It preserves local evidence such as a table cell or a phrase in a figure caption, and multi-vector representations are provably more expressive than any single vector of comparable size~\citep{jayaram2026multi}.

The price is storage. ColPali stores 1{,}031 patch vectors of dimension 128 per page, about 258 KB in bfloat16, so one million pages cost a quarter terabyte of embeddings before any index structure is added. The cost is paid at rest, in memory during search, and in transfer at query time, and the community now names index footprint among the paradigm's central open problems~\citep{clavie2026lir}. The dominant remedies compress the stored set directly: pooling merges similar vectors~\citep{clavie2024reducing}, pruning keeps a salient subset~\citep{liu2026anchor}, merging studies combine both~\citep{ma2025lightcolpali, yan2026sculpting}, and quantization stores vectors as cluster IDs and quantized residuals~\citep{santhanam2022plaid}. These methods share a premise, that the compressed representation is a subset or local average of the encoder's vectors, and they also share a limit. None reports operating below roughly sixteen vectors per page, and methods that reach smaller budgets retrain the encoder~\citep{macavaney2025efficient, veneroso2025crisp, xiao2026metaembed}, which invalidates every embedding already computed.

This paper starts from a different question: \emph{what is the stored object, geometrically?} Measuring ColPali page embeddings across all ten evaluation corpora, we find that the 1{,}031 vectors of a page concentrate near a manifold of intrinsic dimension five to six, living on the unit sphere of the 128-dimensional embedding space. Two further encoders give the same answer, one of them in a 3{,}072-dimensional space. 

If a page is really a low-dimensional surface on the sphere, then a compressed representation should describe those degrees of freedom instead of subsampling the patches. We introduce \method, which builds such a code post hoc from a frozen encoder. Per-page cluster centroids are projected back onto the unit sphere, where the encoder actually places its outputs, a free correction worth up to $+0.093$ nDCG@5 on its own. A zero-initialized network then adjusts those centroids while reading the full token set, so the code starts exactly at normalized clustering and training improves it. At query time, cheap MaxSim over the projected vectors ranks every page, and then, for the top-$L$ candidates only, a shared decoder regenerates the full set of embeddings for exact rescoring.

This design shift has three structural consequences. First, the code lies on the unit sphere by construction. Second, correctness is structural: the code cannot start worse than normalized clustering, and because the decoder emits each stored vector unchanged among its outputs, regeneration can add evidence but cannot destroy it. Third, nothing in the codec conditions on a corpus, so one fit serves every collection.

Because what \method targets is the gap between a compressed code and its own uncompressed system, a stronger encoder raises the ceiling it is closing on, and a better decoder closes more of that gap. 

\newpage
\textbf{Contributions:}

\textbf{Geometry of the stored object.} We measure page token clouds as a manifold of intrinsic dimension five to six on the unit sphere.

\textbf{Spherical anchoring.} We identify a systematic MaxSim underestimate in standard $k$-means clustering and remove it at zero cost by re-projecting centroids to the unit sphere.

\textbf{Generative Late-Interaction Embeddings (\method).} We replace extractive subset sampling with a codec that stores $k$ vectors per page and regenerates all $N$ from them on demand, together with the asymmetric pipeline it enables: initial retrieval runs on the stored vectors alone, and only a top-$L$ shortlist is expanded and rescored exactly.

\section{Related Work}\label{sec:related}

\textbf{Late-interaction retrieval.}
ColBERT introduced token-level document representations scored with MaxSim~\citep{khattab2020colbert}, ColBERTv2 compressed the index with centroid-plus-residual storage~\citep{santhanam2022colbertv2}, and ColPali transferred the paradigm to visual documents with the ViDoRe benchmark, on which late interaction dominates single-vector alternatives~\citep{faysse2025colpali}. Video-ColBERT extends it to video and names storage as its principal drawback~\citep{reddy2025video}. \citet{jayaram2026multi} proves multi-vector embeddings strictly more expressive than single vectors of comparable dimension, which cautions against collapsing pages into one vector. The gap in this line is that expressiveness is bought with storage, and no account exists of when that storage can be compressed safely.

\textbf{Post-hoc reduction of the stored set.}
Token pooling merges document vectors by hierarchical clustering, retaining about 97\% of quality at pool factor 4~\citep{clavie2024reducing}. For visual documents, Light-ColPali finds that merging post-projector embeddings retains 97.8\% of nDCG@5 at merging factor 9 and 93.6\% at factor 49, roughly a $41\times$, smaller index, its most aggressive published point being about twenty-one vectors per page~\citep{ma2025lightcolpali}. Anchor pruning~\citep{liu2026anchor} and prune-then-merge~\citep{yan2026sculpting} extend the near-lossless range. All store a subset or local average of the encoder's vectors, and none reports an operating point below roughly sixteen vectors per page, despite selecting, averaging, and pruning by different criteria. Section~\ref{sec:method-geometry} characterizes the geometry behind that shared limit, and our method targets the budgets below it.

\textbf{Small budgets via retraining.}
ConstBERT learns a projection to a constant number of document vectors~\citep{macavaney2025efficient}, CRISP trains clusterability into the representation~\citep{veneroso2025crisp}, and MetaEmbed trains nested meta-tokens serving budgets 1 to 64 at test time~\citep{xiao2026metaembed}. Each trains the encoder, so adopting one means giving up the public checkpoint and re-encoding the corpus. The training is substantial on its own: MetaEmbed reports 32 H100 GPUs for 30 hours. Budgets are also fixed at training time, and within that set a deployed index can be truncated freely, but reaching a budget outside it requires training again. Our setting is complementary: a frozen public checkpoint, cached embeddings, and a per-budget fit measured in GPU-minutes.

\textbf{Orthogonal axes.}
PLAID~\citep{santhanam2022plaid} and EMVB~\citep{nardini2024efficient} compress each vector and prune candidates at query time while leaving the per-page vector count intact, Matryoshka learning shrinks the dimension axis~\citep{kusupati2022matryoshka,xiang2026mm}, and MUVERA sketches multi-vector scoring for candidate generation while keeping full vectors at rest~\citep{dhulipala2024muvera}. These compose with vector-count compression.

Training-free reduction is cheap and elastic but stops at a shared floor. Retraining crosses the floor but gives up the frozen checkpoint. The orthogonal axes shrink bits, dimensions, or scoring cost. What none of the three asks is what the stored set actually is: a low-dimensional manifold on the sphere. Section~\ref{sec:method} measures that geometry and builds the code it implies. Table~\ref{tab:positioning} summarizes.

\section{Generative Late-Interaction Embeddings}\label{sec:method}

\subsection{What the stored object is}\label{sec:method-geometry}

Let $X = \{x_1, \dots, x_N\} \subset \mathbb{R}^D$ be the token vectors a frozen encoder produces for a page, with $N = 1{,}031$ and $D = 128$ for ColPali, and $Q = \{q_1, \dots, q_m\}$ the vectors for a query. The late-interaction score is
\begin{equation}
\mathrm{MaxSim}(Q, X) \;=\; \sum_{j=1}^{m} \max_{i} \, \langle q_j, x_i \rangle .
\label{eq:maxsim}
\end{equation}
A compressed representation at budget $k$ replaces $X$ with a set $C$ of $k$ vectors, scored by the same operator. Two measured properties of $X$ constrain what $C$ should be. Both are computed on the corpora and encoders of Section~\ref{sec:results-setup}, over $6{,}729$ pages.

\textbf{1) Low intrinsic dimension.} The TwoNN estimator~\citep{facco2017estimating} gives a median of $4.9$ against an ambient dimension of $128$ on ColPali. Per-corpus medians span $4.7$ to $5.1$, from scientific figures to government reports. A second encoder gives $5.1$, and a third, Nemotron v2~\citep{moreira2026nemotron}, gives $6.1$ at an ambient dimension of $3{,}072$ (Table~\ref{tab:geometry}). Ambient dimension varies by $24\times$ across the three and intrinsic dimension varies by one. The number is a property of the token cloud and not of the estimator. On the same ColPali pages, a Gaussian fitted to each page's own covariance, which has the same linear spectrum and no curvature, reads $32.2$, and uniform noise of the same size reads $61.4$.

\textbf{2) Unit norm.} All three encoders L2-normalize their output, so $X$ lies exactly on the sphere $\mathbb{S}^{D-1}$. 

\begin{table}[t]
\centering
\small
\caption{Per-page geometry (medians) for three encoders.}
\label{tab:geometry}
\begin{tabular}{lccc}
\toprule
\textbf{Statistic} & ColPali & ColQwen2 & Nemotron v2 \\
\midrule
Ambient dimension $D$ & $128$ & $128$ & $3{,}072$ \\
Token norm & $1.000$ & $1.000$ & $1.000$ \\
TwoNN intrinsic dimension & $4.9$ & $5.1$ & $6.1$ \\
\quad per-corpus range & $4.7$--$5.1$ & $4.9$--$5.5$ & $5.6$--$8.1$ \\
\bottomrule
\end{tabular}
\end{table}

A code that respects both properties is small and spherical. Clustering respects the first and not the second. A $k$-means centroid is a Euclidean mean of unit vectors, meaning it lies strictly inside the sphere. Because its norm is less than one, it systematically understates the inner products in~\eqref{eq:maxsim} for all query directions.

\begin{proposition}[The centroid norm records its cluster's spread]
\label{prop:norm}
Let $x_1, \dots, x_n$ be unit vectors with mean $c$. Then
\[
\frac{1}{n}\sum_{i=1}^{n} \lVert x_i - c \rVert^2 \;=\; 1 - \lVert c \rVert^2 ,
\]
and consequently the $k$-means objective on the sphere equals $\sum_j n_j\,(1 - \lVert c_j \rVert^2)$ over clusters $j$ with sizes $n_j$.
\end{proposition}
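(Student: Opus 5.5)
The identity follows from expanding the squared norm and using the two defining facts: each $x_i$ has unit norm, and $c$ is the empirical mean. For each $i$ we write
\[
\lVert x_i - c\rVert^2 = \lVert x_i\rVert^2 - 2\langle x_i, c\rangle + \lVert c\rVert^2 = 1 - 2\langle x_i, c\rangle + \lVert c\rVert^2 .
\]
We then average over $i = 1,\dots,n$. By linearity of the inner product and the definition $c = \frac1n\sum_i x_i$, we have $\frac1n\sum_i \langle x_i, c\rangle = \langle c, c\rangle = \lVert c\rVert^2$. The average is therefore $1 - 2\lVert c\rVert^2 + \lVert c\rVert^2 = 1 - \lVert c\rVert^2$, which is the claimed identity.

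For the consequence, we recall that the $k$-means objective for a fixed partition with centroids at the cluster means is $\sum_j \sum_{i \in \text{cluster } j} \lVert x_i - c_j\rVert^2$. We apply the identity to each cluster separately, taking its $n_j$ unit vectors with mean $c_j$. The inner sum then equals $n_j(1 - \lVert c_j\rVert^2)$, and summing over $j$ gives the stated expression.

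There is one point of care. The $k$-means objective is usually minimized jointly over partitions and centroids, so we should state the convention explicitly: we evaluate the objective at a partition whose centroids are the cluster means, as at any $k$-means fixed point or optimum. The optimal centroid for a fixed partition is exactly the mean, so nothing is lost.

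As a side remark, the identity gives $\lVert c\rVert \le 1$, with equality only when all $x_i$ coincide. This justifies the paper's claim that centroids lie strictly inside the sphere for non-degenerate clusters.

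I do not expect any real obstacle; the main step is simply recognizing that the cross term averages to $\lVert c\rVert^2$.
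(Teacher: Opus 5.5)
Your proof is correct and matches the paper's argument: expand $\lVert x_i - c\rVert^2$ using $\lVert x_i\rVert = 1$, average over $i$, and use $\frac1n\sum_i\langle x_i, c\rangle = \lVert c\rVert^2$. Applying the identity cluster by cluster to get the $k$-means form, and noting that $\lVert c\rVert \le 1$ with equality only when all points coincide, are both valid details that the paper leaves implicit.
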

\begin{proof}
Expand $\lVert x_i - c\rVert^2 = 1 - 2\langle x_i, c\rangle + \lVert c\rVert^2$ and average, using $\frac1n\sum_i \langle x_i, c\rangle = \lVert c\rVert^2$.
\end{proof}

Three consequences follow. Firstly, projecting the centroids back onto the unit sphere costs nothing and should improve MaxSim by correcting this systematic underestimation. Secondly, the improvement should shrink as $k$ grows, because tighter clusters have means closer to the sphere. Thirdly, the projection moves the code outside the original data, so a centroid's score is no longer capped by the best true vector. A code that could only underestimate can now overestimate as well.

\subsection{What a learned code must add}

Normalized $k$-means is the training-free stage of \method, and it falls short in two ways that define the rest of the codec. First, it is blind beyond the mean. Each centroid summarizes its cluster by an average, so scoring-relevant structure inside the cluster is unrecoverable from the code no matter how the centroids are post-processed. Second, it is a sample. It stores $k$ points and can only ever score with those $k$ points, so a query token pointing at a region the sample misses loses its evidence. A code trained as a \emph{description} of the page can instead be expanded back into the fine token cloud when a candidate matters, re-materializing evidence that no $k$-vector sample can hold. This is why regeneration is useful at all.

\method discharges these requirements with three components, each of which provably starts at, and can only improve on, normalized $k$-means: (a) the code is initialized at it exactly, (b) is refined against the MaxSim operator itself while reading the full token set, and (c) is decoded under a guarantee that regeneration never lowers a stored score. The encoder is frozen throughout and everything happens post hoc on cached embeddings. Figure~\ref{fig:training} shows the fitting procedure and Figure~\ref{fig:pipeline} the query path.

\subsection{Spherical anchoring}
Per page, $k$-means centroids of the token set $X$ are computed and projected onto the unit sphere, $u_j = c_j / \lVert c_j \rVert$. By Proposition~\ref{prop:norm} the projection removes a systematic MaxSim underestimate at zero cost. Empirically it is worth $+0.031$ to $+0.093$ nDCG@5 on the full benchmark, shrinking as $k$ grows exactly as the proposition predicts (Table~\ref{tab:ladder}). Normalized clustering is therefore the training-free stage of \method and the floor every learned component must clear, as well as the practical one-line recommendation for any dot-product late-interaction system that clusters: normalize your centroids.

\subsection{Zero-initialized refinement that reads the page}
Defining $U = \{u_1, \dots, u_k\}$ for the anchors, a shared cross-attention module refines them against the full token set, with $U$ as queries and $X$ as keys and values,
\begin{equation}
C \;=\; \mathrm{normalize}\bigl(U + \pi_\theta\!\left(\mathrm{Attn}(U,\, X,\, X)\right)\bigr),
\label{eq:refine}
\end{equation}
where $\mathrm{Attn}(Q, K, V)$ is multi-head attention, $\mathrm{normalize}$ projects each row back to the unit sphere as in Section~\ref{sec:method-geometry}, and the output projection $\pi_\theta$ is initialized at zero, so at initialization $\pi_\theta(\cdot) = 0$ and $C = U$ exactly. Intuitively, the code begins exactly at normalized clustering and training can only move it where the objective improves, so the strongest training-free solution is the floor rather than a competitor. The correction reads $X$ directly, so the code can encode scoring-relevant structure that centroids average away, which no post-processing of the centroids alone could recover. The refiner and decoder together hold 415K parameters against the 3B backbone and run once per page at indexing time.

\subsection{Anchored generative read-out}
A shared decoder $g_\psi$ expands the $k$ stored vectors back into $N$ unit vectors for reranking. Its structure encodes three guarantees. First, count-proportional slots: cluster $j$ owns $n_j$ output slots, so the decoder reproduces the page's actual cluster structure rather than a fixed learned layout. Second, an exact anchor: slot 0 of each cluster emits the refined vector verbatim, so the decoded set contains the code set, and because MaxSim is a maximum,
\[
\mathrm{MaxSim}(Q, g_\psi(C)) \;\ge\; \mathrm{MaxSim}(Q, C) \quad \text{per query token.}
\]
Intuitively, regeneration can add evidence but structurally cannot destroy it, with no tuning. Third, bounded displacement. Each generated child starts at its cluster's anchor and moves along the surface of the sphere, never toward or away from the center, by a learned step of at most $\alpha = 0.75$, then renormalized. A child can therefore sit at most $\arctan(0.75) \approx 37^\circ$ from its anchor, so a cluster's children fill a patch around it rather than scattering across the sphere. Which child is which is encoded by its position in the cluster with fixed sine and cosine features, so no per-slot parameters are learned and one decoder serves clusters of any size.

Prior compression selects or averages among the encoder's vectors at indexing time, and that fixed set is all a query can ever be scored against. The generative read-out instead re-materializes the fine structure only when a candidate is worth the cost.

\begin{figure}[t]
\centering
{\includegraphics[width=\linewidth]{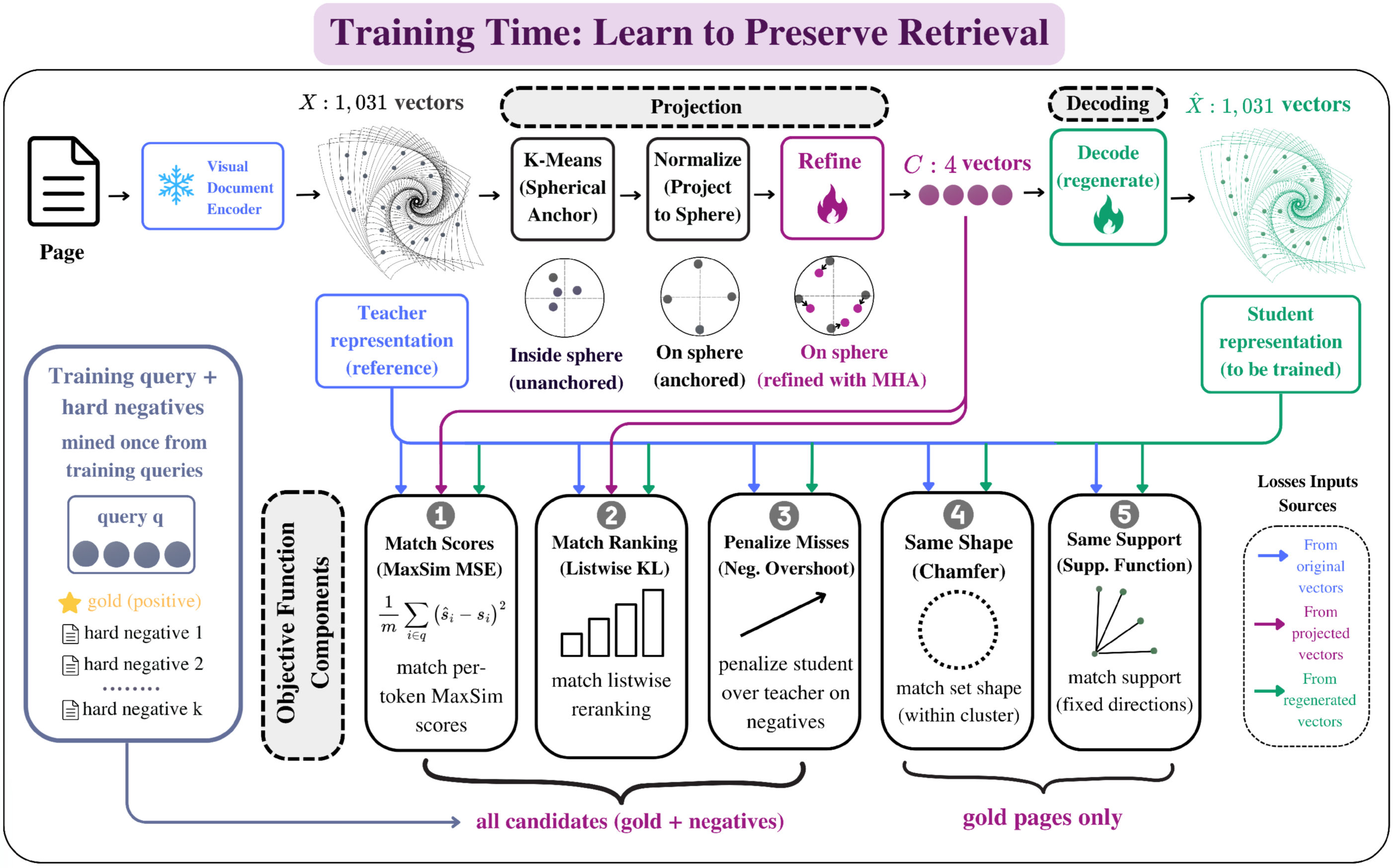}}
\caption{\textbf{Training.} The encoder stays frozen. The projection is initialized at normalized $k$-means and a refiner adjusts it from there. Five kinds of loss jointly optimize the projected and regenerated vectors, with the MaxSim and listwise terms applied to both.}
\label{fig:training}
\end{figure}

\subsection{Two-stage inference}
\method (Figure~\ref{fig:pipeline}) combines the code and the read-out:
\begin{enumerate}
\item \textbf{Stage 1 (retrieve):} score every page against the query by MaxSim over its $k$ stored vectors, at the cost of a pooled baseline.
\item \textbf{Stage 2 (regenerate and rerank):} expand the top-$L$ candidates ($L = 20$) back to $N$ vectors with $g_\psi$ and rescore them by full MaxSim. Non-candidates keep their first-stage order.
\end{enumerate}
The asymmetry is the point: cheap over everything, expensive over almost nothing.

\begin{figure}[t]
\centering
{\includegraphics[width=\linewidth]{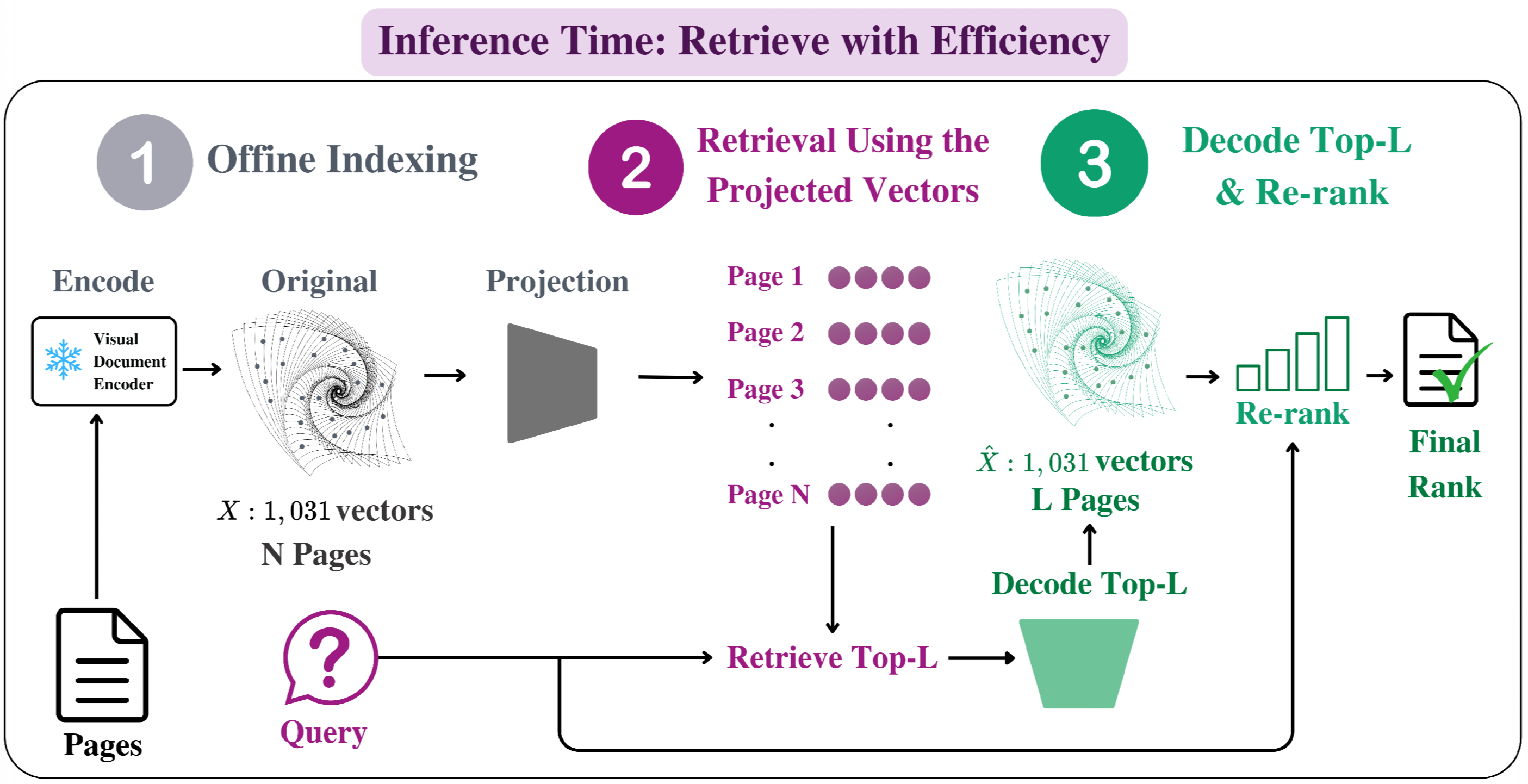}}
\caption{\textbf{Inference.} Every page is scored on its $k$ projected vectors, then only the top-$L$ shortlist is decoded back to full length and rescored exactly.}
\label{fig:pipeline}
\end{figure}

\subsection{Training}\label{sec:method-training}
Two objects are trained and each has one job. The code is scored against every page, so it must rank. The regenerated set is scored only on the shortlist, so it must rank and it must be a page. The losses follow from those two jobs.

\textbf{The code must retrieve.} Two things make a code retrieve like the full page. Firstly, every query token should find in the code the same best match it finds in the page, so we match per-query-token MaxSim values against the frozen encoder,
\begin{equation}
\mathcal{L}_{\mathrm{tok}} = \frac{1}{m}\sum_{j=1}^{m}\Bigl(\max_l \langle q_j, \hat{x}_l\rangle - \max_i \langle q_j, x_i\rangle\Bigr)^{2}.
\label{eq:distill}
\end{equation}
Here $x_i$ ranges over the page's $N$ encoder vectors and $\hat{x}_l$ over the vectors of the read-out being trained, the $k$ code vectors for the code and the $N$ regenerated vectors for the decoder, so the same term applies to both.
Secondly, the order of candidate pages should be the teacher's, so a listwise KL over each query's candidate list matches the ranking and not only the scores. Together these make the code a drop-in first-stage index.

\textbf{The regenerated set must retrieve, and it must be a page.} The same two terms apply to the regenerated vectors, since they are what the shortlist is rescored with. Three more terms enforce what a set of $N$ vectors must be to stand in for the real one. It must not invent evidence. Section~\ref{sec:method-geometry} showed that a normalized code can overestimate, and a regenerated page that scores a non-relevant candidate above its true MaxSim is exactly the error that breaks a rerank, so a one-sided penalty charges negatives only when they exceed the teacher. It must have the right shape. Each cluster's children should occupy the region its real patches occupy, in no particular order, which a Chamfer distance within each cluster of the relevant page measures. And it must reach as far as the real page in every direction. MaxSim reads the extreme point of the set along the query, so we match the support function, the farthest extent of the set along a fixed bank of random directions, between the regenerated and real vectors of the relevant page.

Reconstruction error is absent on purpose. It is minimized by placing every child near its cluster mean, which is exactly the collapse that loses the extreme points MaxSim reads. 

\section{Experiments}\label{sec:results}

\subsection{Setup}\label{sec:results-setup}

\textbf{Protocol.} We evaluate on all ten ViDoRe v1 subsets~\citep{faysse2025colpali} under the benchmark's standard protocol. The codec is fitted once on 5{,}000 pages of the public ColPali training collection, and applied frozen to every test subset. The corpus of each subset is its set of unique pages, de-duplicated by image identity, with query-less pages kept as distractors. We report single-relevant nDCG@5. 
Frozen ColPali v1.3 is used, which gives 1{,}031 patch vectors of dimension 128 per page. Budgets $k \in \{2, 4, 8, 16, 32, 64\}$, i.e.\ $516\times$ down to $16\times$ fewer stored vectors. Section~\ref{sec:results-main} repeats the sweep on ViDoRe v2~\citep{mace2025vidore} and Section~\ref{sec:results-generality} on ColQwen2. 

\textbf{Baselines.} All post hoc on the same frozen encoder at identical stored budget: raw per-page $k$-means, semantic cluster merging (the training-free stage of Light-ColPali~\citep{ma2025lightcolpali}, with its fine-tuning removed), and sequential token pooling~\citep{clavie2024reducing}. Margins are always reported against the strongest training-free baseline chosen per subset and per budget. We additionally reproduce Light-ColPali's fine-tuning stage at matched training budget and source (Section~\ref{sec:results-lcp}). Learned rows are means over three training seeds. Architecture, optimization, and evaluation details are in Appendices~\ref{app:impl} and \ref{app:protocol}.

\subsection{Main Results: ViDoRe v1 and v2}\label{sec:results-main}

\textbf{Aggressive-budget results.} \method reaches 79\% of uncompressed retrieval quality storing 1.0 KB per page and 91\% at 4 KB, and beats every prior baseline on all subsets of both ViDoRe v1 and v2 at every budget (Table~\ref{tab:main}, ~\ref{tab:v2}). The best-powered subset is also among the strongest. TAT-DQA, with 1{,}663 queries, shows margins $+0.039$ to $+0.054$ at $k \le 8$.

\textbf{Where the gains sit.} The margin is positive at every budget but not uniform, and its profile differs between the two benchmarks (Figure~\ref{fig:margin}). On v1 it forms a plateau of about $0.04$ across $k \le 8$, halves at $k=16$, and decays to noise by $k=32$, a mean of $+0.039$ below $k=8$ against $+0.010$ above. The stored code alone turns slightly negative at $k=64$, where normalized clustering already sits $0.027$ from the ceiling and a refiner fitted elsewhere has nothing left to correct. Four of the ten v1 subsets have ceilings of 0.94 to 0.98, so part of that high-budget decay is saturation rather than method failure. On v2, which saturates nowhere, the margin never decays and the band the read-out contributes stays open to $k=64$. Finally, at $k=2$, with fewer anchors than the page's intrinsic dimension, the regeneration is at its coarsest. It adds nothing on v1's results and only $+0.011$ on v2.

\subsection{What limits the cascade}\label{sec:err_decomp}

The oracle row splits the residual error (Figure~\ref{fig:headroom}). At $k=4$ on v1 the gap from \method to the oracle, 0.657 to 0.782, is decode fidelity, recoverable by better decoding, while the gap from the oracle to the ceiling, 0.782 to 0.836, is shortlist recall, recoverable only with a larger $L$. Widening the shortlist confirms the split: from $L=5$ to $L=100$ the oracle rises from 0.705 to 0.822 while \method moves only from 0.647 to 0.660, so the shortlist is not the binding constraint and $L=20$ is already past the point of diminishing returns. The generative read-out contributes $+0.013$ to $+0.016$ at $k=4$ to $8$, a third of the total margin at near-zero marginal cost, and a further $+0.13$ sits in the same shortlist waiting on a better decoder. On v2 that decode-fidelity term stays large at every budget, which is why the read-out keeps earning there. Decoding is therefore the lever with measured headroom, since refining the code further is bounded by what $k$ vectors can hold.

\begin{figure}[h]
\centering
\includegraphics[width=0.7\linewidth]{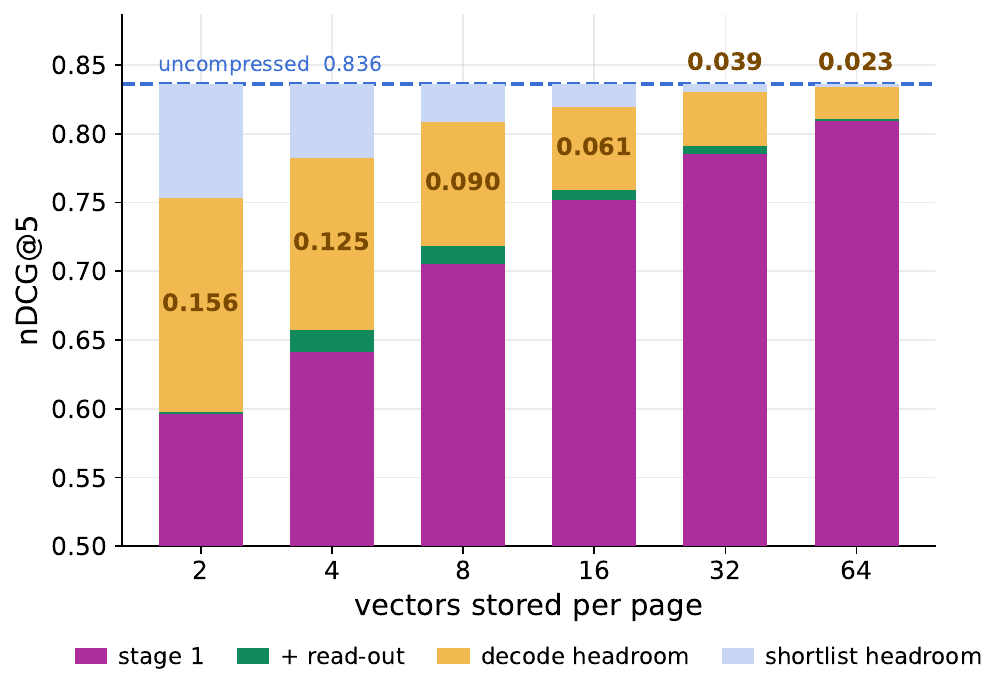}
\caption{Decomposition of the gap to the uncompressed ceiling at each budget on ViDoRe v1.}
\label{fig:headroom}
\end{figure}

\begin{table}[t]
\centering
\small
\caption{ViDoRe v1 (10 subsets), macro-averaged nDCG@5 over all queries.}
\label{tab:main}
\begin{tabular}{lcccccc}
\toprule
\textbf{Method} & $k{=}2$ & $k{=}4$ & $k{=}8$ & $k{=}16$ & $k{=}32$ & $k{=}64$ \\
\midrule
Raw $k$-means & 0.464 & 0.512 & 0.594 & 0.662 & 0.737 & 0.779 \\
Token pooling & 0.553 & 0.584 & 0.624 & 0.657 & 0.702 & 0.747 \\
Cluster merging & 0.447 & 0.471 & 0.531 & 0.613 & 0.690 & 0.763 \\
\midrule
\method & \textbf{0.597} & \textbf{0.657} & \textbf{0.718} & \textbf{0.759} & \textbf{0.791} & \textbf{0.811} \\
\quad shortlist oracle & 0.753 & 0.782 & 0.808 & 0.820 & 0.830 & 0.834 \\
\quad \% of uncompressed (0.836) & 71\% & 79\% & 86\% & 91\% & 95\% & 97\% \\
\bottomrule
\end{tabular}
\end{table}

\begin{table}[t]
\centering
\small
\caption{ViDoRe v2 (4 subsets), macro-averaged nDCG@5 over all queries.}
\label{tab:v2}
\begin{tabular}{lcccccc}
\toprule
\textbf{Method} & $k{=}2$ & $k{=}4$ & $k{=}8$ & $k{=}16$ & $k{=}32$ & $k{=}64$ \\
\midrule
Raw $k$-means & 0.184 & 0.201 & 0.275 & 0.318 & 0.376 & 0.405 \\
Token pooling & 0.228 & 0.241 & 0.244 & 0.276 & 0.307 & 0.370 \\
Cluster merging & 0.188 & 0.181 & 0.190 & 0.220 & 0.285 & 0.363 \\
\midrule
\method & \textbf{0.267} & \textbf{0.330} & \textbf{0.388} & \textbf{0.439} & \textbf{0.472} & \textbf{0.490} \\
\quad shortlist oracle & 0.394 & 0.453 & 0.485 & 0.506 & 0.509 & 0.521 \\
\quad \% of uncompressed (0.517) & 52\% & 64\% & 75\% & 85\% & 91\% & 95\% \\
\bottomrule
\end{tabular}
\end{table}

\subsection{A Small Codec Beats Fine-Tuning at a Matched Budget}\label{sec:results-lcp}

Fine-tuning the encoder on 4{,}000 pages does not reach free normalized clustering at any budget in our hands (Table~\ref{tab:lcp}), while the same source and budget spent on a frozen-backbone codec beats it at all six, by $+0.074$ to $+0.132$. The gap is not evidence against Light-ColPali, whose authors train on 130K queries for about 72 GPU-hours per budget. Instead, it measures what a small training budget buys: 13.3M LoRA parameters over 1.5 GPU-hours perturb a 3B backbone too little to help and enough to hurt, whereas 415K parameters over under three GPU-minutes reshape the stored code directly. Appendix~\ref{app:lcp} gives the adapter, merge, and training settings of the reproduction.

\begin{table}[h]
\centering
\small
\caption{Matched training source and budget, ViDoRe v1, ten subsets, all queries.}
\label{tab:lcp}
\begin{tabular}{lcccccc}
\toprule
\textbf{Method} & $k{=}2$ & $k{=}4$ & $k{=}8$ & $k{=}16$ & $k{=}32$ & $k{=}64$ \\
\midrule
Light-ColPali (LoRA fine-tune, ours) & 0.523 & 0.544 & 0.586 & 0.632 & 0.661 & 0.701 \\
Normalized $k$-means (free) & 0.552 & 0.605 & 0.684 & 0.736 & 0.784 & 0.809 \\
\method (frozen backbone) & \textbf{0.597} & \textbf{0.657} & \textbf{0.718} & \textbf{0.759} & \textbf{0.791} & \textbf{0.811} \\
\bottomrule
\end{tabular}
\end{table}

\subsection{Storage \& Data Efficiency}\label{sec:results-efficiency}

Storage at $k=4$ is 1{,}040 bytes per page against 257.8 KB uncompressed, so one million pages shrink from 258 GB to 1.0 GB. Because the backbone is frozen, changing the storage budget of a deployed index touches only cached embeddings, while a fine-tuned method must re-encode the corpus.

Refitting on 1{,}250, 2{,}500, and 5{,}000 source pages leaves the margins statistically unchanged ($+0.048/+0.052/+0.049$ at $k=4$), so it saturates on roughly a thousand pages, a hundred times smaller a slice of the collection than the encoder trains on. More data is not the lever. What the codec learns is a property of the encoder's geometry, which Table~\ref{tab:geometry} showed is corpus-independent.

\subsection{The Recipe Transfers to a Second Encoder}\label{sec:results-generality}

On ColQwen2 the codec retains 82\% of uncompressed quality at $k=4$ (Table~\ref{tab:generality}). Against spherical anchoring alone, which already reaches 0.702 at $k=4$ on this encoder, it wins at five of six budgets and loses only at $k=2$. ColQwen2 is the stronger encoder, so clustering already sits closer to its ceiling at every budget and the room a learned code has is correspondingly smaller and flatter, with margins of $+0.015$ below $k=8$ against $+0.013$ above. Read together with ViDoRe v2 (Table~\ref{tab:v2}), which saturates nowhere and keeps its margins to $k=64$, this shows that how sharply the gains concentrate at aggressive budgets is a property of the setting rather than a constant.

\begin{table}[h]
\centering
\small
\caption{Full sweep repeated on ColQwen2 over the ten ViDoRe v1 subsets.}
\label{tab:generality}
\begin{tabular}{lcccccc}
\toprule
\textbf{Method} & $k{=}2$ & $k{=}4$ & $k{=}8$ & $k{=}16$ & $k{=}32$ & $k{=}64$ \\
\midrule
Raw $k$-means & 0.524 & 0.623 & 0.706 & 0.762 & 0.806 & 0.841 \\
Token pooling & 0.619 & 0.657 & 0.702 & 0.740 & 0.774 & 0.812 \\
Cluster merging & 0.529 & 0.520 & 0.553 & 0.643 & 0.743 & 0.809 \\
\midrule
\method & \textbf{0.630} & \textbf{0.727} & \textbf{0.787} & \textbf{0.831} & \textbf{0.854} & \textbf{0.868} \\
\quad \% of uncompressed (0.883) & 71\% & 82\% & 89\% & 94\% & 97\% & 98\% \\
\midrule
Subsets improved & 7/10 & 10/10 & 10/10 & 10/10 & 10/10 & 10/10 \\
\bottomrule
\end{tabular}
\end{table}

\section{Ablation Study}\label{sec:results-ablations}

\begin{table}[t]
\centering
\small
\caption{What each component contributes, ViDoRe v1, three-seed means. Each row adds one component to the row above, so the three increments sum to the full-system gain over raw $k$-means.}
\label{tab:ladder}
\begin{tabular}{lcccccc}
\toprule
\textbf{Component} & $k{=}2$ & $k{=}4$ & $k{=}8$ & $k{=}16$ & $k{=}32$ & $k{=}64$ \\
\midrule
Raw $k$-means & 0.464 & 0.512 & 0.594 & 0.662 & 0.737 & 0.779 \\
\midrule
$+$ spherical anchoring \emph{(training-free)} & 0.552 & 0.605 & 0.684 & 0.736 & 0.784 & 0.809 \\
$+$ learned code (stage 1) & 0.596 & 0.641 & 0.705 & 0.752 & 0.785 & 0.809 \\
$+$ generative read-out (stage 2) & \textbf{0.597} & \textbf{0.657} & \textbf{0.718} & \textbf{0.759} & \textbf{0.791} & \textbf{0.811} \\
\bottomrule
\end{tabular}
\end{table}

\begin{table}[ht]
\centering

\begin{minipage}[t]{0.5\linewidth}
    \centering\small
    \caption{Decoder capacity. Fitted on 500 pages.}
    \label{tab:hyper-capacity}
\end{minipage}\hfill
\begin{minipage}[t]{0.5\linewidth}
    \centering\small
    \caption{Shortlist size at $k=4$}
    \label{tab:hyper-shortlist}
\end{minipage}

\vspace{1mm} 

\begin{minipage}[t]{0.5\linewidth}
    \centering\small
    \begin{tabular}{lccc}
        \toprule
        width $\times$ depth & params & $k=4$ & $k=16$ \\
        \midrule
        $128 \times 1$  & 184K & \textbf{0.662} & 0.755 \\
        $128 \times 3$  & 316K & 0.658 & 0.748 \\
        $256 \times 1$  & 415K & 0.660 & 0.755 \\
        $512 \times 1$  & 1.3M & 0.660 & \textbf{0.757} \\
        $512 \times 3$  & 3.4M & 0.659 & 0.756 \\
        $1024 \times 1$ & 4.6M & 0.659 & 0.751 \\
        $1024 \times 3$ & 13.0M & 0.659 & 0.753 \\
        \bottomrule
    \end{tabular}
\end{minipage}\hfill
\begin{minipage}[t]{0.5\linewidth}
    \centering\small
    \begin{tabular}{lcc}
        \toprule
        $L$ & \method & oracle \\
        \midrule
        5   & 0.647 & 0.705 \\
        10  & 0.655 & 0.750 \\
        20  & 0.658 & 0.783 \\
        50  & 0.660 & 0.810 \\
        100 & \textbf{0.660} & \textbf{0.822} \\
        \bottomrule
    \end{tabular}
\end{minipage}

\end{table}

\textbf{Component decomposition.} Table~\ref{tab:ladder} assigns the gains cleanly. Spherical anchoring is the largest single factor at every budget, from $+0.093$ at $k=4$ to $+0.030$ at $k=64$, and it is free. The learned code adds $+0.044$ to $+0.016$ up to $k=16$ and nothing beyond it. The generative read-out is the aggressive-budget specialist on v1, peaking at $+0.016$ at $k=4$. on ViDoRe v2 it does not decay, because that benchmark saturates nowhere and the headroom survives to $k=64$ (Figure~\ref{fig:margin}). 

\textbf{Hyperparameters are not where the result comes from.} Varying decoder width from 128 to 1024 and depth from one block to three, a seventy-fold range from 184K to 13M parameters, moves nDCG@5 by at most $0.009$ with no monotone trend, and the smallest decoder we train is the best of them at $k=4$ (Table~\ref{tab:hyper-capacity}). The 415K configuration we report is an arbitrary point inside a flat band rather than a tuned one. The shortlist behaves differently and is informative for a different reason (Table~\ref{tab:hyper-shortlist}). Widening $L$ from 5 to 20 is worth $+0.011$, while widening it from 20 to 100 is worth $+0.002$ even though the oracle over those same shortlists climbs from 0.705 to 0.822. The candidates are present and the decoder does not exploit them

\section{Conclusion}\label{sec:conclusion}

Late-interaction retrieval owes its accuracy to storing token-level vectors and its cost to the same decision. We argued the way out is geometric. A page's thousand token vectors concentrate near a manifold of intrinsic dimension five to six on the unit sphere, the same on every corpus and all three encoders we measure. The sphere yields a free correction any dot-product late-interaction system can adopt today. The small dimension is what makes the method possible, since a page with few degrees of freedom can be carried by a few anchors and one shared decoder, which is why 415K parameters and a thousand training pages suffice. 

We read the result as opening a new axis for the storage problem rather than closing it. Generating on demand is a different lever from storing fewer or smaller vectors, and the shortlist oracle measures how much of it is still available: at four vectors per page, decoding the same code and the same shortlist perfectly would reach $0.782$ against our $0.657$. Better decoders are therefore the immediate move on this axis, and the code itself is bounded by what $k$ vectors can hold.

Three next steps look most promising. Extending the sweep to further late-interaction encoders would test how far the recipe carries beyond the two we evaluate. Composing \method with quantized storage, since the two act on different axes of the footprint. And the largest gains should wait where token sets are largest and most redundant, which points to video late interaction~\citep{reddy2025video}.

\section{Acknowledgements}\label{sec:acknowledgements}
We are grateful to the KAUST Academy for its generous support, and especially to Prof.\ Sultan Albarakati that made this work possible. For computer time, this research used Ibex managed by the Supercomputing Core Laboratory at King Abdullah University of Science \& Technology (KAUST) in Thuwal, Saudi Arabia.

\newpage
\bibliography{bibliography}

@inproceedings{faysse2025colpali,
  title={Colpali: Efficient document retrieval with vision language models},
  author={Faysse, Manuel and Sibille, Hugues and Wu, Tony and Omrani, Bilel and Viaud, Gautier and Hudelot, C{\'e}line and Colombo, Pierre},
  booktitle={International Conference on Learning Representations},
  volume={2025},
  pages={61424--61449},
  year={2025}
}

@inproceedings{khattab2020colbert,
  title     = {{ColBERT}: Efficient and Effective Passage Search via Contextualized Late
               Interaction over {BERT}},
  author    = {Khattab, Omar and Zaharia, Matei},
  booktitle = {Proceedings of the 43rd International ACM SIGIR Conference on Research and
               Development in Information Retrieval},
  pages     = {39--48},
  year      = {2020}
}

@inproceedings{santhanam2022colbertv2,
  title     = {{ColBERTv2}: Effective and Efficient Retrieval via Lightweight Late Interaction},
  author    = {Santhanam, Keshav and Khattab, Omar and Saad-Falcon, Jon and Potts, Christopher and
               Zaharia, Matei},
  booktitle = {Proceedings of the 2022 Conference of the North American Chapter of the
               Association for Computational Linguistics: Human Language Technologies},
  pages     = {3715--3734},
  year      = {2022}
}

@inproceedings{santhanam2022plaid,
  title     = {{PLAID}: An Efficient Engine for Late Interaction Retrieval},
  author    = {Santhanam, Keshav and Khattab, Omar and Potts, Christopher and Zaharia, Matei},
  booktitle = {Proceedings of the 31st ACM International Conference on Information and
               Knowledge Management (CIKM)},
  pages     = {1747--1756},
  year      = {2022},
  note      = {arXiv:2205.09707}
}

@inproceedings{reddy2025video,
  title={Video-colbert: Contextualized late interaction for text-to-video retrieval},
  author={Reddy, Arun and Martin, Alexander and Yang, Eugene and Yates, Andrew and Sanders, Kate and Murray, Kenton and Kriz, Reno and De Melo, Celso M and Van Durme, Benjamin and Chellappa, Rama},
  booktitle={2025 IEEE/CVF Conference on Computer Vision and Pattern Recognition (CVPR)},
  pages={19691--19701},
  year={2025},
  organization={IEEE}
}

@inproceedings{clavie2026lir,
  title={Lir: The first workshop on late interaction and multi vector retrieval@ ecir 2026},
  author={Clavi{\'e}, Benjamin and Li, Xianming and Chaffin, Antoine and Khattab, Omar and Aarsen, Tom and Faysse, Manuel and Li, Jing},
  booktitle={European Conference on Information Retrieval},
  pages={158--168},
  year={2026},
  organization={Springer}
}

@article{clavie2024reducing,
  title={Reducing the footprint of multi-vector retrieval with minimal performance impact via token pooling},
  author={Clavi{\'e}, Benjamin and Chaffin, Antoine and Adams, Griffin},
  journal={arXiv preprint arXiv:2409.14683},
  year={2024}
}

@inproceedings{ma2025lightcolpali,
  title     = {Towards Storage-Efficient Visual Document Retrieval: An Empirical Study on
               Reducing Patch-Level Embeddings},
  author    = {Ma, Yubo and Li, Jinsong and Zang, Yuhang and Wu, Xiaobao and Dong, Xiaoyi and
               Zhang, Pan and Cao, Yuhang and Duan, Haodong and Wang, Jiaqi and Cao, Yixin and
               Sun, Aixin},
  booktitle = {Findings of the Association for Computational Linguistics: ACL 2025},
  pages     = {19568--19580},
  address   = {Vienna, Austria},
  publisher = {Association for Computational Linguistics},
  year      = {2025},
  note      = {arXiv:2506.04997}
}

@misc{liu2026anchor,
  title        = {Structural Anchor Pruning: Training-Free Multi-Vector Compression for
                  Visual Document Retrieval},
  author       = {Liu, Zhuchenyang and Hu, Ziyu and Zhang, Yao and Xiao, Yu},
  howpublished = {arXiv:2601.20107},
  year         = {2026}
}

@article{mace2025vidore,
  title={Vidore benchmark v2: Raising the bar for visual retrieval},
  author={Mac{\'e}, Quentin and Loison, Ant{\'o}nio and Faysse, Manuel},
  journal={arXiv preprint arXiv:2505.17166},
  year={2025}
}

@inproceedings{yan2026sculpting,
  title={Sculpting the Vector Space: Towards Efficient Multi-Vector Visual Document Retrieval via Prune-then-Merge Framework},
  author={Yan, Yibo and Ou, Mingdong and Cao, Yi and Zou, Xin and Huo, Jiahao and Liu, Shuliang and Kwok, James and Hu, Xuming},
  booktitle={Findings of the Association for Computational Linguistics: ACL 2026},
  pages={24883--24925},
  year={2026}
}

@article{veneroso2025crisp,
  title={Crisp: Clustering multi-vector representations for denoising and pruning},
  author={Veneroso, Jo{\~a}o and Jayaram, Rajesh and Rao, Jinmeng and {\'A}brego, Gustavo Hern{\'a}ndez and Hadian, Majid and Cer, Daniel},
  journal={arXiv preprint arXiv:2505.11471},
  year={2025}
}

@inproceedings{macavaney2025efficient,
  title={Efficient constant-space multi-vector retrieval},
  author={MacAvaney, Sean and Mallia, Antonio and Tonellotto, Nicola},
  booktitle={European Conference on Information Retrieval},
  pages={237--245},
  year={2025},
  organization={Springer}
}

@article{dhulipala2024muvera,
  title={Muvera: Multi-vector retrieval via fixed dimensional encoding},
  author={Dhulipala, Laxman and Hadian, Majid and Jayaram, Rajesh and Lee, Jason and Mirrokni, Vahab},
  journal={Advances in Neural Information Processing Systems},
  volume={37},
  pages={101042--101073},
  year={2024}
}

@inproceedings{nardini2024efficient,
  title={Efficient multi-vector dense retrieval with bit vectors},
  author={Nardini, Franco Maria and Rulli, Cosimo and Venturini, Rossano},
  booktitle={European Conference on Information Retrieval},
  pages={3--17},
  year={2024},
  organization={Springer}
}

@article{jayaram2026multi,
  title={Multi-Vector Embeddings are Provably More Expressive than Single Vector Embeddings},
  author={Jayaram, Rajesh},
  journal={arXiv preprint arXiv:2606.23475},
  year={2026}
}

@article{moreira2026nemotron,
  title={Nemotron ColEmbed V2: Top-Performing Late Interaction Embedding Models for Visual Document Retrieval},
  author={Moreira, Gabriel de Souza P and Ak, Ronay and Xu, Mengyao and Holworthy, Oliver and Schifferer, Benedikt and Yu, Zhiding and Babakhin, Yauhen and Osmulski, Radek and Cai, Jiarui and Chesler, Ryan and others},
  journal={arXiv preprint arXiv:2602.03992},
  year={2026}
}

@inproceedings{xiao2026metaembed,
  title={Metaembed: Scaling multimodal retrieval at test-time with flexible late interaction},
  author={Xiao, Zilin and Ma, Qi and Gu, Mengting and Chen, Chun-cheng and Chen, Xintao and Ordonez, Vicente and Mohan, Vijai},
  booktitle={International Conference on Learning Representations},
  volume={2026},
  pages={23619--23638},
  year={2026}
}

@article{xiang2026mm,
  title={MM-Matryoshka: Towards Budget-Elastic Visual Document Retrieval via a 2D Multimodal Matryoshka Training Framework},
  author={Xiang, Haowen and Yan, Yibo and Huo, Jiahao and Huang, Yu and Cao, Yi and Ou, Mingdong and Hu, Xuming},
  journal={arXiv preprint arXiv:2606.07654},
  year={2026}
}

@article{kusupati2022matryoshka,
  title={Matryoshka representation learning},
  author={Kusupati, Aditya and Bhatt, Gantavya and Rege, Aniket and Wallingford, Matthew and Sinha, Aditya and Ramanujan, Vivek and Howard-Snyder, William and Chen, Kaifeng and Kakade, Sham and Jain, Prateek and others},
  journal={Advances in Neural Information Processing Systems},
  volume={35},
  pages={30233--30249},
  year={2022}
}

@article{facco2017estimating,
  title={Estimating the intrinsic dimension of datasets by a minimal neighborhood information},
  author={Facco, Elena and d’Errico, Maria and Rodriguez, Alex and Laio, Alessandro},
  journal={Scientific reports},
  volume={7},
  number={1},
  pages={12140},
  year={2017},
  publisher={Nature Publishing Group UK London}
}
\bibliographystyle{iclr2027_conference}

\newpage
\appendix
\section{Implementation details}\label{app:impl}

\textbf{Architecture.} The refiner is one four-head cross-attention layer with a zero-initialized output projection. The decoder is a two-layer MLP of width 256 with the bounded step of Section~\ref{sec:method} capped at $\alpha = 0.75$. Together they hold 415K parameters. Per-page $k$-means uses two restarts.

\textbf{Optimization.} We train with AdamW at learning rate $2 \times 10^{-4}$ and weight decay $10^{-4}$, cosine decay to 5\% of the initial rate over 100 epochs, and gradient clipping at 1.0. A batch is eight queries, each paired with its relevant page and seven hard negatives drawn from the fifty highest-scoring non-relevant pages under the frozen code. Loss weights are 1.0 for the MaxSim and listwise terms on the regenerated set, 0.5 for the same two terms on the code, 0.5 for the Chamfer and overshoot terms, and 0.1 for the support term, with a listwise temperature of 0.07, 128 fixed support directions, and 48 sampled patches per cluster for the Chamfer term.

\textbf{Model selection and inference.} Two checkpoints are kept from each run, the best single-stage code and the best cascade on that holdout, and each is reported on the metric it is deployed under. The shortlist is $L = 20$ and pages outside it keep their first-stage order. Stored vectors are bfloat16 with one fp16 norm and one uint16 count per cluster. One budget fits in 2.7 minutes on a single NVIDIA A100 80GB, and seeds are 0, 1, and 2.

\section{Evaluation protocol details}\label{app:protocol}

No learned component ever sees a ViDoRe test query or page, so we evaluate on all queries of each subset: 3{,}943 queries in total, of which 1{,}663 are on TAT-DQA. The corpus of each subset is its set of unique pages, de-duplicated by image identity, with query-less pages kept as distractors. Our pipeline reproduces the official leaderboard average for ColPali v1.3. Margins are reported against the strongest training-free baseline chosen per subset and per budget. Learned rows are means over three training seeds with seed standard error at most 0.002 on every cell and margin.

\section{Light-ColPali reproduction}\label{app:lcp}

Light-ColPali~\citep{ma2025lightcolpali} merges post-projector embeddings by clustering and fine-tunes the encoder through the merge. We reproduce the fine-tuning half on the same source as \method. The merge follows the authors: per-page cluster assignment is computed without gradient, and the cluster means are renormalized and trained end to end. Because \texttt{vidore/colpali-v1.3} is itself a PEFT checkpoint, we merge and unload its own adapter before attaching a fresh LoRA (rank 32, $\alpha = 32$, dropout 0.05). We verify at initialization that the untrained merge scores exactly as stock ColPali plus merging before any update. Training uses the pairwise loss of the ColPali training recipe at batch 8 with gradient accumulation 4, five epochs over 4{,}000 query-page pairs and 10\% warmup. One adapter is trained per budget, 13.3M trainable parameters and about 1.5 GPU-hours each on one A100, and evaluated zero-shot on all queries of the ten ViDoRe v1 subsets under the same protocol as \method. The published method trains on 130K queries for about 72 GPU-hours per budget and reports results only at $k \gtrsim 16$, so Table~\ref{tab:lcp} is a statement about what a matched small budget buys and not about the published method.

\section{Per-subset results}\label{app:persubset}

Tables~\ref{tab:app-v1a}, \ref{tab:app-v1b}, and \ref{tab:app-v2} give every cell behind the macro averages of Tables~\ref{tab:main} and \ref{tab:v2}.

\begin{table}[t]
\centering
\small
\caption{ViDoRe v1 per subset, part 1 of 2. nDCG@5 over all queries, three-seed means for the learned rows. Best method per column in bold; the uncompressed row is the ceiling and does not compete.}
\label{tab:app-v1a}
\begin{tabular}{lcccccc}
\toprule
\textbf{Method} & $k{=}2$ & $k{=}4$ & $k{=}8$ & $k{=}16$ & $k{=}32$ & $k{=}64$ \\
\midrule
\multicolumn{7}{l}{\emph{ArxivQA}} \\
\quad Raw $k$-means & 0.451 & 0.561 & 0.669 & 0.710 & 0.760 & 0.794 \\
\quad Token pooling & 0.527 & 0.572 & 0.641 & 0.671 & 0.698 & 0.731 \\
\quad Cluster merging & 0.495 & 0.537 & 0.646 & 0.723 & 0.779 & 0.800 \\
\quad \method, training-free stage & 0.584 & 0.666 & 0.722 & 0.742 & 0.786 & 0.806 \\
\quad \method, stage 1 & 0.610 & 0.668 & 0.729 & 0.760 & 0.785 & 0.810 \\
\quad \method, full & \textbf{0.617} & \textbf{0.710} & \textbf{0.747} & \textbf{0.769} & \textbf{0.803} & \textbf{0.812} \\
\quad Uncompressed & 0.827 & 0.827 & 0.827 & 0.827 & 0.827 & 0.827 \\
\midrule
\multicolumn{7}{l}{\emph{DocVQA}} \\
\quad Raw $k$-means & 0.118 & 0.183 & 0.255 & 0.335 & 0.400 & 0.460 \\
\quad Token pooling & 0.191 & 0.220 & 0.250 & 0.307 & 0.343 & 0.387 \\
\quad Cluster merging & 0.140 & 0.183 & 0.269 & 0.343 & 0.411 & 0.470 \\
\quad \method, training-free stage & 0.204 & 0.271 & 0.344 & 0.413 & 0.466 & \textbf{0.500} \\
\quad \method, stage 1 & 0.238 & 0.280 & 0.346 & 0.401 & 0.450 & 0.489 \\
\quad \method, full & \textbf{0.249} & \textbf{0.308} & \textbf{0.370} & \textbf{0.424} & \textbf{0.469} & 0.495 \\
\quad Uncompressed & 0.525 & 0.525 & 0.525 & 0.525 & 0.525 & 0.525 \\
\midrule
\multicolumn{7}{l}{\emph{InfoVQA}} \\
\quad Raw $k$-means & 0.613 & 0.642 & 0.669 & 0.721 & 0.786 & 0.815 \\
\quad Token pooling & 0.684 & 0.705 & 0.721 & 0.747 & 0.765 & 0.791 \\
\quad Cluster merging & 0.620 & 0.603 & 0.641 & 0.686 & 0.744 & 0.792 \\
\quad \method, training-free stage & 0.678 & 0.717 & 0.747 & 0.786 & \textbf{0.818} & \textbf{0.832} \\
\quad \method, stage 1 & 0.696 & 0.721 & 0.753 & 0.788 & 0.812 & 0.830 \\
\quad \method, full & \textbf{0.701} & \textbf{0.736} & \textbf{0.761} & \textbf{0.796} & 0.811 & 0.831 \\
\quad Uncompressed & 0.842 & 0.842 & 0.842 & 0.842 & 0.842 & 0.842 \\
\midrule
\multicolumn{7}{l}{\emph{Shift Project}} \\
\quad Raw $k$-means & 0.297 & 0.335 & 0.412 & 0.473 & 0.544 & 0.602 \\
\quad Token pooling & 0.392 & 0.403 & 0.429 & 0.485 & 0.540 & 0.609 \\
\quad Cluster merging & 0.273 & 0.256 & 0.273 & 0.358 & 0.387 & 0.566 \\
\quad \method, training-free stage & 0.386 & 0.407 & 0.547 & 0.584 & 0.646 & \textbf{0.693} \\
\quad \method, stage 1 & \textbf{0.398} & \textbf{0.451} & \textbf{0.561} & \textbf{0.615} & 0.657 & 0.687 \\
\quad \method, full & 0.396 & 0.451 & 0.549 & 0.610 & \textbf{0.661} & \textbf{0.693} \\
\quad Uncompressed & 0.768 & 0.768 & 0.768 & 0.768 & 0.768 & 0.768 \\
\midrule
\multicolumn{7}{l}{\emph{Synthetic AI}} \\
\quad Raw $k$-means & 0.592 & 0.641 & 0.716 & 0.807 & 0.916 & 0.931 \\
\quad Token pooling & 0.742 & 0.749 & 0.792 & 0.824 & 0.882 & 0.924 \\
\quad Cluster merging & 0.548 & 0.561 & 0.579 & 0.717 & 0.836 & 0.926 \\
\quad \method, training-free stage & 0.706 & 0.756 & 0.810 & 0.908 & 0.938 & 0.960 \\
\quad \method, stage 1 & \textbf{0.779} & \textbf{0.819} & 0.867 & \textbf{0.926} & 0.940 & 0.960 \\
\quad \method, full & 0.757 & 0.810 & \textbf{0.880} & 0.922 & \textbf{0.943} & \textbf{0.966} \\
\quad Uncompressed & 0.984 & 0.984 & 0.984 & 0.984 & 0.984 & 0.984 \\
\bottomrule
\end{tabular}
\end{table}
\begin{table}[t]
\centering
\small
\caption{ViDoRe v1 per subset, part 2 of 2. nDCG@5 over all queries, three-seed means for the learned rows. Best method per column in bold; the uncompressed row is the ceiling and does not compete.}
\label{tab:app-v1b}
\begin{tabular}{lcccccc}
\toprule
\textbf{Method} & $k{=}2$ & $k{=}4$ & $k{=}8$ & $k{=}16$ & $k{=}32$ & $k{=}64$ \\
\midrule
\multicolumn{7}{l}{\emph{Synthetic Energy}} \\
\quad Raw $k$-means & 0.551 & 0.613 & 0.666 & 0.791 & 0.890 & 0.920 \\
\quad Token pooling & 0.691 & 0.710 & 0.732 & 0.735 & 0.812 & 0.873 \\
\quad Cluster merging & 0.562 & 0.571 & 0.653 & 0.720 & 0.811 & 0.897 \\
\quad \method, training-free stage & 0.662 & 0.704 & 0.814 & 0.871 & \textbf{0.924} & \textbf{0.929} \\
\quad \method, stage 1 & 0.736 & 0.747 & 0.830 & \textbf{0.892} & 0.923 & \textbf{0.929} \\
\quad \method, full & \textbf{0.739} & \textbf{0.786} & \textbf{0.843} & 0.890 & 0.920 & 0.926 \\
\quad Uncompressed & 0.938 & 0.938 & 0.938 & 0.938 & 0.938 & 0.938 \\
\midrule
\multicolumn{7}{l}{\emph{Synthetic Gov.}} \\
\quad Raw $k$-means & 0.514 & 0.585 & 0.691 & 0.741 & 0.852 & 0.905 \\
\quad Token pooling & 0.639 & 0.713 & 0.747 & 0.791 & 0.873 & 0.899 \\
\quad Cluster merging & 0.465 & 0.526 & 0.568 & 0.671 & 0.777 & 0.833 \\
\quad \method, training-free stage & 0.624 & 0.705 & 0.782 & 0.825 & 0.912 & \textbf{0.940} \\
\quad \method, stage 1 & \textbf{0.732} & 0.765 & 0.804 & 0.858 & 0.918 & 0.935 \\
\quad \method, full & 0.712 & \textbf{0.783} & \textbf{0.835} & \textbf{0.878} & \textbf{0.923} & 0.931 \\
\quad Uncompressed & 0.950 & 0.950 & 0.950 & 0.950 & 0.950 & 0.950 \\
\midrule
\multicolumn{7}{l}{\emph{Synthetic Health}} \\
\quad Raw $k$-means & 0.683 & 0.701 & 0.804 & 0.856 & 0.890 & 0.938 \\
\quad Token pooling & 0.771 & 0.802 & 0.860 & 0.869 & 0.904 & 0.922 \\
\quad Cluster merging & 0.621 & 0.650 & 0.719 & 0.759 & 0.835 & 0.908 \\
\quad \method, training-free stage & 0.757 & 0.780 & 0.868 & 0.901 & 0.932 & 0.955 \\
\quad \method, stage 1 & 0.821 & 0.855 & 0.910 & 0.917 & 0.940 & \textbf{0.963} \\
\quad \method, full & \textbf{0.822} & \textbf{0.869} & \textbf{0.922} & \textbf{0.922} & \textbf{0.945} & 0.962 \\
\quad Uncompressed & 0.976 & 0.976 & 0.976 & 0.976 & 0.976 & 0.976 \\
\midrule
\multicolumn{7}{l}{\emph{TabFQuAD}} \\
\quad Raw $k$-means & 0.601 & 0.617 & 0.723 & 0.792 & 0.832 & 0.844 \\
\quad Token pooling & 0.606 & 0.652 & 0.700 & 0.734 & 0.760 & 0.801 \\
\quad Cluster merging & 0.582 & 0.650 & 0.739 & 0.811 & 0.833 & 0.850 \\
\quad \method, training-free stage & 0.621 & 0.691 & 0.768 & 0.822 & \textbf{0.841} & 0.852 \\
\quad \method, stage 1 & 0.628 & 0.710 & 0.784 & 0.827 & \textbf{0.841} & 0.853 \\
\quad \method, full & \textbf{0.643} & \textbf{0.716} & \textbf{0.795} & \textbf{0.832} & 0.837 & \textbf{0.854} \\
\quad Uncompressed & 0.852 & 0.852 & 0.852 & 0.852 & 0.852 & 0.852 \\
\midrule
\multicolumn{7}{l}{\emph{TAT-DQA}} \\
\quad Raw $k$-means & 0.222 & 0.240 & 0.333 & 0.397 & 0.501 & 0.578 \\
\quad Token pooling & 0.283 & 0.319 & 0.369 & 0.403 & 0.448 & 0.536 \\
\quad Cluster merging & 0.166 & 0.173 & 0.226 & 0.345 & 0.483 & 0.585 \\
\quad \method, training-free stage & 0.297 & 0.356 & 0.434 & 0.507 & 0.577 & 0.626 \\
\quad \method, stage 1 & 0.326 & 0.395 & 0.468 & 0.531 & 0.587 & 0.633 \\
\quad \method, full & \textbf{0.338} & \textbf{0.401} & \textbf{0.480} & \textbf{0.544} & \textbf{0.599} & \textbf{0.639} \\
\quad Uncompressed & 0.700 & 0.700 & 0.700 & 0.700 & 0.700 & 0.700 \\
\bottomrule
\end{tabular}
\end{table}
\begin{table}[t]
\centering
\small
\caption{ViDoRe v2 per subset. Graded nDCG@5 over all queries. Best method per column in bold; the uncompressed row is the ceiling and does not compete.}
\label{tab:app-v2}
\begin{tabular}{lcccccc}
\toprule
\textbf{Method} & $k{=}2$ & $k{=}4$ & $k{=}8$ & $k{=}16$ & $k{=}32$ & $k{=}64$ \\
\midrule
\multicolumn{7}{l}{\emph{Biomedical lectures}} \\
\quad Raw $k$-means & 0.187 & 0.243 & 0.321 & 0.378 & 0.446 & 0.508 \\
\quad Token pooling & 0.296 & 0.322 & 0.347 & 0.388 & 0.422 & 0.460 \\
\quad Cluster merging & 0.251 & 0.280 & 0.310 & 0.361 & 0.459 & 0.512 \\
\quad \method, training-free stage & 0.285 & 0.356 & 0.419 & 0.461 & 0.510 & 0.548 \\
\quad \method, stage 1 & 0.327 & 0.390 & 0.440 & 0.472 & 0.521 & 0.550 \\
\quad \method, full & \textbf{0.327} & \textbf{0.408} & \textbf{0.458} & \textbf{0.496} & \textbf{0.527} & \textbf{0.557} \\
\quad Uncompressed & 0.564 & 0.564 & 0.564 & 0.564 & 0.564 & 0.564 \\
\midrule
\multicolumn{7}{l}{\emph{Economics reports}} \\
\quad Raw $k$-means & 0.350 & 0.323 & 0.331 & 0.361 & 0.378 & 0.372 \\
\quad Token pooling & 0.369 & 0.343 & 0.323 & 0.345 & 0.357 & 0.392 \\
\quad Cluster merging & 0.335 & 0.323 & 0.278 & 0.233 & 0.255 & 0.285 \\
\quad \method, training-free stage & 0.308 & 0.344 & 0.377 & 0.405 & 0.403 & 0.415 \\
\quad \method, stage 1 & 0.354 & 0.386 & 0.404 & 0.424 & 0.412 & \textbf{0.433} \\
\quad \method, full & \textbf{0.396} & \textbf{0.407} & \textbf{0.405} & \textbf{0.447} & \textbf{0.427} & 0.431 \\
\quad Uncompressed & 0.468 & 0.468 & 0.468 & 0.468 & 0.468 & 0.468 \\
\midrule
\multicolumn{7}{l}{\emph{ESG reports (human)}} \\
\quad Raw $k$-means & 0.082 & 0.082 & 0.219 & 0.263 & 0.347 & 0.375 \\
\quad Token pooling & 0.132 & 0.181 & 0.179 & 0.208 & 0.235 & 0.347 \\
\quad Cluster merging & 0.098 & 0.057 & 0.068 & 0.143 & 0.250 & 0.364 \\
\quad \method, training-free stage & 0.157 & 0.188 & 0.332 & 0.364 & 0.430 & 0.467 \\
\quad \method, stage 1 & \textbf{0.184} & 0.215 & \textbf{0.377} & 0.404 & 0.434 & 0.455 \\
\quad \method, full & 0.174 & \textbf{0.228} & 0.349 & \textbf{0.409} & \textbf{0.444} & \textbf{0.485} \\
\quad Uncompressed & 0.523 & 0.523 & 0.523 & 0.523 & 0.523 & 0.523 \\
\midrule
\multicolumn{7}{l}{\emph{ESG reports}} \\
\quad Raw $k$-means & 0.119 & 0.156 & 0.227 & 0.272 & 0.334 & 0.364 \\
\quad Token pooling & 0.114 & 0.116 & 0.128 & 0.164 & 0.214 & 0.282 \\
\quad Cluster merging & 0.069 & 0.065 & 0.102 & 0.142 & 0.178 & 0.293 \\
\quad \method, training-free stage & 0.132 & 0.214 & 0.304 & 0.359 & 0.442 & 0.467 \\
\quad \method, stage 1 & 0.161 & 0.248 & 0.318 & 0.370 & 0.441 & 0.473 \\
\quad \method, full & \textbf{0.172} & \textbf{0.276} & \textbf{0.340} & \textbf{0.406} & \textbf{0.488} & \textbf{0.487} \\
\quad Uncompressed & 0.514 & 0.514 & 0.514 & 0.514 & 0.514 & 0.514 \\
\bottomrule
\end{tabular}
\end{table}

\end{document}